\newtheorem{Exm}{Example}
\newtheorem{Lem}{Lemma}
\newtheorem{Fact}{Fact}
\newtheorem{Theorem}{Theorem}
\newtheorem*{proof*}{Proof}
\begin{document}

\title{Optimal Geographic Caching In Cellular Networks}

\author{Bart{\l}omiej~B{\l}aszczyszyn$^\dagger$ and Anastasios~Giovanidis$^\ast$\\[2ex]}

\maketitle

\begin{abstract}
In this work we consider the problem of an optimal geographic placement of content  in wireless cellular networks modelled by Poisson point processes. Specifically, for the typical user requesting some particular content and whose popularity follows a given law (e.g. Zipf),  we calculate the probability of finding the content cached in one of the base stations. Wireless coverage follows the usual signal-to-interference-and noise ratio (SINR) model, or some variants of it.  We formulate and solve the problem of an optimal randomized content placement policy, to maximize the user's hit probability. The result dictates that it is not always optimal to follow the standard policy "cache the most popular content, everywhere". In fact, our numerical results regarding three different coverage scenarios, show that  
the optimal policy significantly increases the chances of hit under high-coverage regime, i.e., when  the probabilities of coverage by more than just one station are high enough.
\end{abstract}

\begin{keywords}
wireless cache; Poisson cellular network; SINR k-coverage; hit probability; content popularity; optimization \end{keywords}

\let\thefootnote\relax\footnotetext{\hspace{-2ex}$^\dagger$Inria/Ens, 23
  av. d'Italie 75214 Paris, France; Bartek.Blaszczyszyn@ens.fr\\ 
$^\ast$CNRS - T\'el\'ecom ParisTech,
23 Avenue d'Italie,
75013 Paris, France; anastasios.giovanidis@telecom-paristech.fr}
\newcommand{\thefootnote}{\arabic{footnote}}

\section{Introduction}

Today's cellular networks provide additionally to 
traditional telephony and messaging services, a considerable amount of multimedia content. Multimedia traffic demand is expected to show exponential increase in the years to come. Recent suggestions to densify the network via multi-tier heterogeneous equipment 
or to apply cooperation techniques \cite{BacGiov13}, will locally improve wireless throughput but will eventually push the network's backhaul and its available bandwidth to its limit. 

Considering the fact that the great volume of traffic consists of multiple demands for the same content by various users, a solution to relieve the overloaded network is to cache popular content at intermediate nodes. In the case of cellular networks, this practically translates in adding physical memory at the central base station (BS) and the smaller scale (pico, nano, etc.) stations. 
There are several benefits from doing so: (a) The most evident is the reduction of backhaul traffic load. (b) Another one is the reduction of multimedia (audio/video) playback latencies. When the content is cached at a node close to the user, it is delivered with less delay than fetching it from the core network. (c) Caching can give the opportunity to adapt the multimedia quality to the actual end-users' channel, and consequently improve Quality-of-Experience \cite{PoularINFOCOM14}.

In the literature there are already considerable works dealing with the problem of cellular caching. Among these, the paper by  Shanmugam et al \cite{DimakisCaireCacheIT} treats the problem of optimal association of content to wireless caches, given BS-user topology. Bastug et al in \cite{BastugCacheISWCS14} provide stochastic geometry results on the user outage probability and average delay experienced. The optimal storage allocation related to user mobility is addressed by Poularakis and Tassiulas in \cite{PoularISIT13}. Further benefits and challenges from the application of caching in 5G networks are presented in the work by Wang et al \cite{Wang5Gcache}. In all the above, it is common to consider a fixed content library and a popularity distribution which is known a priori and follows the Zipf distribution. The latter was proposed and verified to well approximate the hit distribution of Internet content by Newman \cite{Newman05}.

The great difference of caching in cellular networks as compared to wired ones, is that there can appear planar regions with overlapping coverage by more than one BS \cite{KeelerBartek13}. When a user finds him/herself in such areas, he/she can choose service by any of the covering stations. Such observation has indeed been taken into consideration in the works of \cite{DimakisCaireCacheIT}, \cite{PoularINFOCOM14}, \cite{PoularISIT13}, but the discrete problem formulations suggested fail to give solutions with global validity, because they are based on a priori known BS-user topologies. 

In our work we will revisit the problem of optimal content placement in cellular caches by assuming a known distribution of the \textit{coverage number}, i.e. the number of BSs simultaneously covering a user. We will see different expressions for this distribution based on different coverage models, like the $\mathrm{SINR}$ model in \cite{KeelerBartek13} where the network topology was modelled by a Poisson point process. As a main result, we will provide the \textit{optimal probabilistic placement policy}, which guarantees maximal total hit probability for random network topologies. To achieve optimality, the policy exploits multi-coverage regions and delivers considerable performance improvement compared to the standard "cache the most popular content, everywhere" strategy. We argue that the latter is not optimal in general networks but only either for an isolated cache \cite{LeonardiCacheUni} or when there is no coverage overlap in the network. 

The model under study is presented in Section \ref{secII}, where the probabilistic caching policy is introduced and three coverage model examples are given. In Section \ref{secIII} we state and solve the content placement optimization problem. Evaluation of the optimal strategy for three different coverage models is presented in Section \ref{secV}. We conclude the paper in Section~\ref{secVI}.

\section{Model under study}
\label{secII}

We consider a cellular network where the positions of Base Stations (BSs) coincide with the atoms from the realization of a two dimensional (2D) Poisson Point Process (PPP) $\Phi=\left\{x_i\right\}$. The PPP is homogeneous and has intensity $\lambda>0$.
The performance of the network is evaluated at the Cartesian origin $\left(0,0\right)$, which we denote as the \textit{typical user} $o$. Due to the Slivnyak-Mecke theorem and the stationarity and isotropy of the PPP, the results for the typical user apply to any user randomly located on the 2D plane \cite{BacBlaVol1}. 

\subsection{Network Coverage Number}
In cellular networks a user at a random location may be covered by more than one BS, or may not be covered at all. A user may also be covered by multiple networks. The so called \textit{coverage number} $\mathcal{N}$ is a random variable (r.v.) \cite{KeelerBartek13} that depends on the features of the communications scheme and the network parameters. It has a mass function
\begin{equation}
\label{CovNum}
p_m:=\mathbb{P}\left[\mathcal{N}=m\right],\qquad m=0,1,\ldots
\end{equation}
The maximum number of covering BSs is
$M\in\mathbb{N}^+\cup\left\{\infty\right\}$, where we let the coverage
number be unlimited in the general case. Later in the section, we will
give specific expressions for the $p_m$, which depend on the network
evaluated.  Obviously, it holds
\begin{equation}
\label{SumPm}
\sum_{m=0}^{M} p_m = 1.
\end{equation}
Some specific coverage models will be presented in Section~\ref{ss.CoverageModels}.

\subsection{Content and its Popularity}
Each user has a request for a specific content (say video file) that he/she wants to receive. In our model the set (library) of available content is finite. It is denoted by $\mathcal{C}:=\left\{c_1,c_2,\ldots,c_J\right\}$, where an element $c_j$ is an entire file. The cardinality of the set is $J$. We consider that all content has the same size, normalised to $1$.  Cases of unequal size will not be treated in this work, but we can always assume that each file can be divided into chunks of equal size, so the same analysis can still be applied. 
Furthermore, each content is related to its popularity, which we assume known a priori. We order the content by popularity: $c_1$ is the most popular content, $c_2$ the second most popular and so on. The popularity follows a distribution $\left\{a_j\right\}$. To be consistent with the above ordering, $a_1\geq a_2\geq \ldots\geq a_J$. Without losing in generality, we will often consider that the distribution has a Zipf probability mass function and consequently the probability that a user (hence the typical one) will ask for content $c_j$ is equal to 
\begin{equation}
\label{PMFzipf}
a_j  =  A^{-1}j^{-\gamma}.
\end{equation}
Here, $\gamma$ is the Zipf exponent, often (but not necessarily) chosen as $\gamma<1$, so that $a_1/a_2=2^{\gamma}<2$. 
It holds 
\begin{equation}
\label{sumAJ}
\sum_{j=1}^{J}a_j  =  1,
\end{equation}
and this explains the normalisation factor $A:=\sum_{j=1}^{J} j^{-\gamma}$. 


\subsection{Content Placement to Caches}
We assume that a cache memory of size $K\geq 1$ is installed and available on each BS. The memory inventory of BS $x_i\in\Phi$ is denoted by $\Xi^{(i)}$,  which is a subset of  $\mathcal{C}$, 
with the number of elements not larger than $K$; i.e., $|\Xi^{(i)}|\le K$ for all $i$.

We consider a probabilistic model, where the content is \textit{independently},  placed in the cache memories of different BSs, according to {\em the same distribution}. In other words, $\Xi^{(i)}$ are assumed independent identically distributed (random) subsets of~$\mathcal{C}$.
Denote by 
$$b_j:=\mathbf{P}\left(c_j\in\Xi^{(i)}\right)$$
the probability that the content $c_j$ is stored at a given base station.
(The model is homogeneous, hence the values of $b_j$ are common for all BSs $x_i$;  and the superscript ${\cdot}^{(i)}$ can be omitted when considering a generic base station.) 

As we shall see, the probability that a typical user finds the content he/she is looking for in the inventory of a  base station covering him/her (which is the performance metric we want to maximize) depends on the distribution of the (random) set $\Xi$ {\em only} through the one-set-coverage probabilities $b_j$, $j=1,\ldots,J$. And these probabilities do not define the distribution of the random set 
$\Xi$ --- hence the content placement policy ---  uniquely.

When looking for the optimal values $b_j$, $j=1,\ldots,J$, we shall consider the following constraints
\begin{eqnarray}
\label{sumBKa}
\sum_{j=1}^{J}b_j \leq K, & \\
\label{sumBKb}
0\leq b_j\leq 1, & \forall j.
\end{eqnarray}
The second condition is obvious ($b_j$ is a probability). Regarding the first one we have the following result.
\begin{Fact}
Assuming~(\ref{sumBKb}), the condition~(\ref{sumBKa}) is necessary and sufficient for the existence 
of a distribution of $\Xi$ satisfying   $|\Xi|\le K$ almost surely, i.e. existence of a random content placement policy requiring no  more than $K$ slots of  memory at each base station.
\end{Fact}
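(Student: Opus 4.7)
My plan is to prove the two directions of the equivalence separately, with necessity being essentially a one-line observation and sufficiency requiring an explicit construction.

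For the necessity direction, I would note that if $|\Xi|\le K$ almost surely, then taking expectations gives $K\ge \mathbb{E}[|\Xi|]=\mathbb{E}\bigl[\sum_{j=1}^{J}\mathbf{1}\{c_j\in\Xi\}\bigr]=\sum_{j=1}^{J}b_j$ by linearity of expectation and the definition of $b_j$. This is the only step; the bounds~(\ref{sumBKb}) are not needed here.

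For the sufficiency direction, assuming~(\ref{sumBKa})--(\ref{sumBKb}), I would give a constructive proof by a ``roulette wheel'' sampling scheme. The idea is to lay out, consecutively and without overlap, intervals $I_j$ of length $b_j$ inside $[0,K]$: say $I_j=\bigl[\sum_{l<j}b_l,\sum_{l\le j}b_l\bigr)$. Because $\sum_j b_j\le K$, this fits, possibly leaving an empty remainder. Next, draw a single uniform random variable $U$ on $[0,1)$ and consider the $K$ equispaced points $U,U+1,\ldots,U+K-1$ in $[0,K)$. Finally, declare $\Xi:=\{c_j : \text{some point lands in } I_j\}$. Since the $K$ points are spaced exactly at distance $1$ and each $I_j$ has length $b_j\le 1$, each interval is hit by at most one point; hence $|\Xi|\le K$ almost surely. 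The marginal probability that $I_j$ is hit is the Lebesgue measure of the set of offsets $U\in[0,1)$ for which at least one of the translates lies in $I_j$; by a simple modulo-$1$ argument, this measure is exactly $b_j$, yielding $\mathbf{P}(c_j\in\Xi)=b_j$.

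The main obstacle is the sufficiency part, and specifically verifying that this explicit construction realizes the prescribed marginals $b_j$ while respecting the almost-sure size constraint simultaneously. The role of condition~(\ref{sumBKb}) is crucial here: without $b_j\le 1$ an interval could be hit by more than one grid point, breaking the simple one-to-one correspondence between offsets and coverage events. The role of condition~(\ref{sumBKa}) is that it is precisely what makes the intervals fit into $[0,K)$ without overlap. Once this construction is exhibited and the two easy verifications (size bound and marginal probability) are performed, the proof is complete.
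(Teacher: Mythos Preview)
Your proposal is correct and matches the paper's approach essentially verbatim: the necessity argument via $\mathbb{E}[|\Xi|]=\sum_j b_j$ is identical, and your ``roulette wheel'' construction on $[0,K)$ with the $K$ equispaced probes $U,U+1,\ldots,U+K-1$ is exactly the paper's \emph{Probabilistic placement policy} (Fig.~\ref{PlacePolicy}) described in a horizontal rather than stacked-vertical coordinate system. Your write-up is in fact somewhat more explicit than the paper's in verifying the marginals and the size bound, but the underlying idea is the same.
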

\begin{IEEEproof}
The necessity follows from the observation that the right-hand side of~(\ref{sumBKa}) is equal to the 
expected number of content items in the base station inventory. Indeed
$$\mathbb{E}\left[\sum_{j=1}^J \mathbbm{1}(c_j\in\Xi)\right]=\sum_{j=1}^J\mathbf{P}\left(c_j\in\Xi\right)=\sum_{j=1}^{J}b_j\,.$$
We prove  the sufficiency by constructing (in what follows) some particular content placement policy satisfying  $|\Xi|\le K$.
\end{IEEEproof}

\begin{figure}[t!]
\centering
\includegraphics[trim = 0mm 0mm 0mm 0mm, clip, width=0.20\textwidth]{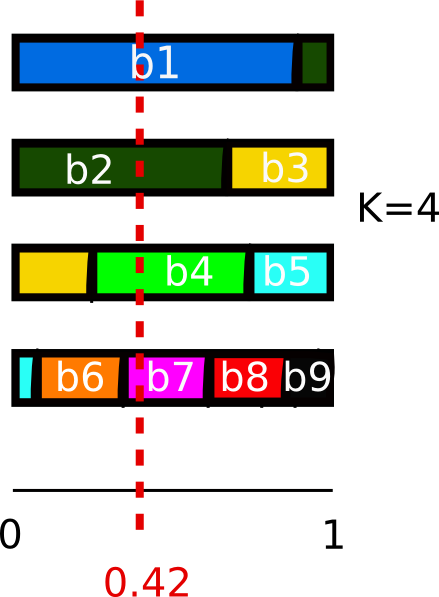}
\caption{A realization of the probabilistic placement policy for the case of $J=9$ contents and $K=4$ memory slots. We first draw uniformly a random number (0.42). The vertical line at this point intersects with each of the $4$ memory intervals at a specific content. We conclude from the figure that the subset $\left\{c_1,c_2,c_4,c_7\right\}$ will be cached.}
\label{PlacePolicy}
\end{figure}

The following policy having one-set-coverage probabilities $b_j$ satisfying~(\ref{sumBKa}) and (\ref{sumBKb}) respects the cache size constraint $K$.

\textbf{Probabilistic placement policy:} Given the cache memory of size $K\geq 1$, and 
 the values $b_j$, $j=1,\ldots,J$ satisfying~(\ref{sumBKa}) and (\ref{sumBKb}), 
we divide it into $K$ continuous memory intervals of unit length and place them one under the other, as shown in the example considered on Fig. \ref{PlacePolicy} (the example assumes  equality in~(\ref{sumBKa})). The $J$ contents of the library are picked one after the other without replacement and their values $b_j$ fill the memory. If not enough space is available in one unitary memory slot, the content fills the slot underneath. 
In order to randomly choose a set of contents, 
we pick uniformly a number within $\left[0,1\right]$ and draw a vertical line which intersects the memory space covered by no more than $K$ distinct contents, (It intersects exactly $K$ contents if the equality is observed in~(\ref{sumBKa}).) The contents are distinct because $b_j\in\left[0,1\right]$. 
Moreover, the probability of appearance of content $j$ in a memory of size $K$ is exactly equal to $b_j$. 


\section{Optimal Content Placement --- Problem Statement and Solution}
\label{secIII}

The performance metric of interest is the \textit{total hit probability}, i.e. the probability that the typical user will find the content he/she is asking for in one of the BSs he/she is covered from. This is $1$-minus the probability that the user does not find its content. This happens when the user is covered by $m=0$ BSs (i.e. no coverage), or by some number $m>0$ but the content has not been saved in the cache memory space of any of these BSs. The performance metric is equal to
\begin{equation}
\label{PerfMe}
f\left(b_1,\ldots,b_J\right) := 1-\sum_{j=1}^{J}a_j  \sum_{m=0}^{\infty} p_m\left(1-b_j\right)^{m}.
\end{equation}
To better understand the above expression, $m$ is the number of BSs that the user is covered by. The probability $\left(1-b_j\right)^{m}$ is the probability that none of the cache memory slots of these stations contains the desired content. Hence, the expression in (\ref{PerfMe}) is the probability that the content requested by the user should be fetched from the backhaul network.

We can control the hit probability, by varying the content placement probabilities $b_j$. In the following we will find the optimal vector $\left(b_1,\ldots,b_J\right)$, that maximises the objective function $f$ in (\ref{PerfMe}). The constraint set of our problem is 
\begin{eqnarray}
\label{CS1}
\mathcal{F}_1 & := & \left\{ (b_1,\ldots,b_J)|\ b_1+\ldots+b_J\leq K,\right.\nonumber\\
& & \left. \ \& \ b_j\in\left[0,1\right], \ \forall j\right\}
\end{eqnarray}
The parameters that influence the objective function but cannot be used as actions are (a) the size of memory $K$, (b) the probability of content popularity $a_j$, $j=1,\ldots,J$ and (c) the probability of coverage by m BSs $p_m$, $m=0,1,\ldots$. 
%
%
%
Altogether, we summarize the optimisation problem below, which we call GCP (Geographic Caching Problem) and in the following we will try to find its solution. 
\begin{center}
\begin{tabular}{l l c}
$\mathbf{\max}$ 	& 	$f\left(b_1,\ldots, b_J\right)$	& \textbf{[GCP]}\\
\textbf{s.t.}		&	$\left(b_1,\ldots,b_J\right)\in\mathcal{F}_1$		&
\end{tabular}.
\end{center}
We first give two Lemmas that facilitate the solution.
\begin{Lem}
\label{Lem1}
The objective function of [GCP] has the following two properties:
\begin{itemize}
\item \textbf{P.1:} It is separable w.r.t. 
$b_1,\ldots,b_J$.
%
\item \textbf{P.2:} It is increasing and concave in $b_j$, $\forall j$. Consequently, it is a concave function of $\left(b_1,\ldots,b_J\right)$.
\end{itemize}
\end{Lem}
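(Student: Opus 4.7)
My plan is to prove P.1 by inspection of the defining formula for $f$, and P.2 by reducing the claim to an elementary property of the scalar map $b \mapsto (1-b)^m$ on $[0,1]$ and then invoking closure of the classes ``non-increasing'' and ``convex'' under non-negative combinations and pointwise limits. This avoids appealing to differentiability, which would otherwise require a moment assumption on $\{p_m\}$ when $M=\infty$.

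For P.1, I would simply rewrite
\[
f(b_1,\ldots,b_J) \;=\; 1 \;-\; \sum_{j=1}^J a_j\, h(b_j), \qquad h(b) \;:=\; \sum_{m=0}^{\infty} p_m\,(1-b)^m,
\]
which exhibits $f$ as a constant minus a sum of functions each depending on a single coordinate. Separability is then immediate.

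For P.2, I would observe that for every fixed $m\ge 0$, the map $b \mapsto (1-b)^m$ is non-increasing and convex on $[0,1]$ (trivial for $m=0,1$; elementary for $m\ge 2$). Each partial sum $\sum_{m=0}^N p_m(1-b)^m$ is therefore a non-negative combination of non-increasing convex functions, hence itself non-increasing and convex. Since $\sum_{m=0}^{\infty} p_m = 1$ dominates the series uniformly on $[0,1]$, the limit $h$ exists pointwise and inherits both properties, because monotonicity and convexity are preserved under pointwise limits. Multiplying by $-a_j \le 0$ flips these to non-decreasing and concave, so each term $-a_j h(b_j)$ is non-decreasing and concave in $b_j$; adding the constant $1$ and summing over $j$ preserves coordinate-wise monotonicity and coordinate-wise concavity.

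To upgrade coordinate-wise concavity to joint concavity on the convex domain $[0,1]^J$, I would note that any separable sum $\sum_j F_j(b_j)$ of concave $F_j$ is jointly concave: along a segment $\lambda x + (1-\lambda)y$, the argument of each $F_j$ also moves along a one-dimensional segment, so applying Jensen's inequality coordinate-wise and summing yields Jensen's inequality for the sum. The only subtle point in the entire argument is the treatment of $M=\infty$, and the main obstacle is to make sure we do not silently assume differentiability of $h$; the pointwise-limit formulation above sidesteps this completely and gives both properties in full generality.
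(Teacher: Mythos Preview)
Your proof is correct. For \textbf{P.1} you and the paper do essentially the same thing: exhibit $f$ as a sum of single-coordinate terms (the paper substitutes $1=\sum_j a_j$ to write $f=\sum_j a_j\bigl(1-h(b_j)\bigr)$, while you keep the constant out front; these are equivalent). For \textbf{P.2} the approaches genuinely diverge. The paper simply differentiates, checking that the first derivative of each summand is $\ge 0$ and the second is $\le 0$. Your argument via closure of the classes ``non-increasing'' and ``convex'' under non-negative combinations and pointwise limits is more careful: term-by-term differentiation of $h(b)=\sum_{m} p_m(1-b)^m$ when $M=\infty$ tacitly requires $\sum_m m\,p_m<\infty$ for the first derivative and $\sum_m m(m-1)\,p_m<\infty$ for the second, hypotheses the paper invokes elsewhere (e.g.\ $\mathbb{E}[\mathcal{N}]$ appears in Theorem~\ref{Th1}) but does not verify here. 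So your route trades a two-line calculus check for a slightly longer argument that holds without any moment assumption on the coverage distribution.
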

\begin{proof}
\textbf{P.1} comes by rewriting the objective function after replacing $1=\sum_{j=1}^Ja_j$ (see also (\ref{sumAJ})). \textbf{P.2} becomes trivial due to the separability property. We only need to show that the first derivative of $g(b_j)$ is $\geq 0$ and the second $\leq 0$, $\forall b_j$.
\end{proof}
\begin{Lem}
\label{Lem2}
At the optimal solution, the sum constraint inequality (\ref{sumBKa}) is active, i.e. the optimal solution satisfies 
\begin{equation}
\label{EqConst}
b_1^*+\ldots+b_J^*  =  K.
\end{equation}
\end{Lem}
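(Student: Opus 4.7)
\begin{proof*}[Proof sketch of Lemma \ref{Lem2}]
The plan is a standard argument by contradiction, exploiting the strict monotonicity implicit in Lemma~\ref{Lem1}. Suppose, for contradiction, that an optimizer $(b_1^*,\ldots,b_J^*)\in\mathcal{F}_1$ satisfies $\sum_{j=1}^J b_j^*<K$, and let $\delta:=K-\sum_{j=1}^J b_j^*>0$.

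The first step is to locate a coordinate that is not already pinned at its upper bound. In the interesting regime $K\le J$ (cache size no larger than the library), the inequality $\sum_j b_j^*<K\le J$ together with $b_j^*\le 1$ rules out having $b_j^*=1$ for every $j$; hence there exists an index $j_0$ with $b_{j_0}^*<1$. (If $K\ge J$, the lemma is either trivially true at $K=J$ with all $b_j=1$, or degenerate for $K>J$, in which case one simply interprets the assertion under the standing assumption that not all content fits in one cache.) Next I would pick $\varepsilon:=\tfrac{1}{2}\min\bigl(\delta,\,1-b_{j_0}^*\bigr)>0$ and define a perturbed vector $\tilde b$ by $\tilde b_{j_0}:=b_{j_0}^*+\varepsilon$ and $\tilde b_j:=b_j^*$ for $j\ne j_0$. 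By construction $\tilde b\in\mathcal{F}_1$.

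The final step invokes Lemma~\ref{Lem1}: since $f$ is separable and each summand $g_j(b_j)=-a_j\sum_{m\ge 0}p_m(1-b_j)^m$ has derivative $a_j\sum_{m\ge 1}m p_m(1-b_j)^{m-1}>0$ whenever $a_j>0$, $b_j<1$, and $p_m>0$ for some $m\ge 1$, the partial map $b_{j_0}\mapsto f(\ldots,b_{j_0},\ldots)$ is strictly increasing on $[0,1)$. Therefore $f(\tilde b)>f(b^*)$, contradicting the optimality of $b^*$, which forces equality in~(\ref{sumBKa}).

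The main obstacle is really the bookkeeping of the two feasibility constraints at once: one must perturb a coordinate that is strictly interior to $[0,1]$, while simultaneously respecting the sum bound. This is handled by the choice of $\varepsilon$ above. The only genuine caveat is degenerate data: if $p_0=1$ or $a_{j_0}=0$, strict monotonicity fails and the proof gives only that the constraint \emph{can} be made active at an optimum, not that it must be; excluding these trivial cases (where $f\equiv 0$ or the content is never requested) is harmless for the intended application.
\end{proof*}
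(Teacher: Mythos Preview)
Your argument is correct and follows essentially the same route as the paper's own proof: assume the sum constraint is slack, bump some coordinate upward, and invoke the monotonicity property \textbf{P.2} from Lemma~\ref{Lem1} to contradict optimality. If anything, your version is more careful---you explicitly ensure the perturbed coordinate is not already at its upper bound $1$ and you flag the degenerate cases $p_0=1$ and $a_{j_0}=0$, both of which the paper's proof glosses over.
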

\begin{proof} Suppose that the inequality is inactive, i.e. strictly $<K$ for the optimal solution. But then, for any $l\leq J$, we can increase $b_l^*\rightarrow b_l^*+\epsilon$, so that the constraint is satisfied with equality. Substituting in the objective function $b_l^*+\epsilon$ instead of $b_l^*$ the value of the function will increase, because $f$ is increasing over $b_l$, by \textbf{P.2}. Hence the primal optimal solution cannot leave the constraint active and (\ref{EqConst}) is true.
\end{proof}
Since the objective function is concave by \textbf{P.2} and the constraint set is linear (affine inequalities), the optimisation problem can be solved as a \textit{convex program}. We will make use of the Lagrangian relaxation method (see \cite{BoydBook}). 
Let us relate the dual price $\mu\geq 0$ to the sum constraint inequality (\ref{sumBKa}). 
The Lagrangian function is 
\begin{eqnarray}
\label{LagPerfMe}
L\left(b_1,\ldots,b_J,\mu\right) & = & \sum_{j=1}^Ja_j\left(1-\sum_{m=0}^{\infty}p_m\left(1-b_j\right)^{m}\right)+\nonumber\\
& +& \mu\left(K-\sum_{j=1}^Jb_j\right),
\end{eqnarray}
and the remaining constraint set is 
$\mathcal{F}_2:=\left\{ b_j\in\left[0,1\right], \ \forall j=1,\ldots,J\right\}$.
%
We can systematically find the optimal primal $(b_j^*)$ and dual ($\mu^*$) variables by solving a min-max problem. Additionally, in our case where we deal with a convex program, the optimal value of the min-max problem ($f^*$) is equal to the optimal value of the original problem [GCP] with objective function $f$, that is
\begin{eqnarray}
\label{MinMaxL}
f^*:=\max_{\mathcal{F}_1} f\left(b_1,\ldots,b_J\right) & = & \min_{\mu\geq 0}\max_{\mathcal{F}_2}L\left(b_1,\ldots,b_J,\mu\right).\nonumber\\
 & = & f\left(b_1^*,\ldots,b_J^*\right).
\end{eqnarray}
We then say that the duality gap between the original [GCP] problem and the min-max problem is zero. 
\begin{Theorem}
\label{Th1}
The optimal primary variables, that maximise the [GCP] objective, given the optimal dual variable $\mu^*$ is $b_j^*=b_j\left(\mu^*\right)$ with the expression
\begin{eqnarray}
\label{SolveSubJJ}
b_j\left(\mu^*\right) & = & \left\{
\begin{tabular}{l l}
$1$, & if $a_j p_1> \mu^*$\\
$\omega\left(\mu^*\right)$, & if $a_j p_1 \leq \mu^* \leq a_j\mathbb{E}\left[\mathcal{N}\right]$\\
$0$, & if $a_j\mathbb{E}\left[\mathcal{N}\right]< \mu^*$ 
\end{tabular}.
\right.
\end{eqnarray}
In the above, $\mathbb{E}\left[\mathcal{N}\right]=\sum_{m=1}^{\infty}m p_m$ and $\omega\left(\mu^*\right)$ is the solution over $b_j$ of the equation
\begin{equation}
\label{Solve2}
a_j\sum_{m=1}^M p_m m (1-b_j)^{m-1}=\mu^*.
\end{equation}
The optimal dual variable $\mu^*$ satisfies the equality
\begin{equation}
\label{Solve3}
b_1\left(\mu^*\right)+\ldots+b_J\left(\mu^*\right)  =  K.
\end{equation}
\end{Theorem}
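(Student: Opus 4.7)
The plan is a standard Lagrangian/KKT analysis, made straightforward by the structure Lemmas \ref{Lem1} and \ref{Lem2} already establish. By \textbf{P.2} of Lemma \ref{Lem1}, $f$ is concave and $\mathcal{F}_1$ is a convex polytope, so by the strong-duality identity (\ref{MinMaxL}) already stated in the paper it suffices to exhibit $(b_1^*,\ldots,b_J^*,\mu^*)$ that simultaneously solves the inner maximisation and satisfies the active-constraint equality (\ref{Solve3}) guaranteed by Lemma \ref{Lem2}; such a saddle point is then automatically primal-dual optimal.

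Because of separability (\textbf{P.1}), the inner maximisation $\max_{\mathcal{F}_2} L(\cdot,\mu)$ decouples into $J$ independent one-dimensional concave maximisations over $b\in[0,1]$ of
\[
g_j(b):=a_j\Bigl(1-\sum_{m=0}^{\infty}p_m(1-b)^m\Bigr)-\mu b.
\]
I would then compute
\[
g_j'(b)=a_j\sum_{m=1}^{\infty}p_m\,m(1-b)^{m-1}-\mu,
\]
and read off its endpoint values $g_j'(1)=a_jp_1-\mu$ (only the $m=1$ term survives at $b=1$) and $g_j'(0)=a_j\mathbb{E}[\mathcal{N}]-\mu$. Since $g_j$ is concave, $g_j'$ is nonincreasing on $[0,1]$, which yields the three cases of (\ref{SolveSubJJ}) directly: if $g_j'(1)>0$, $g_j$ is increasing and the maximiser is $b=1$; if $g_j'(0)<0$, $g_j$ is decreasing and the maximiser is $b=0$; otherwise $g_j'$ vanishes at a unique interior point $\omega(\mu)$ satisfying (\ref{Solve2}), which by concavity is the global maximiser on $[0,1]$.

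To determine $\mu^*$, one observes that $\mu\mapsto b_j(\mu)$ built by this case analysis is continuous and nonincreasing on $[0,\infty)$, equal to $1$ for $\mu\le a_jp_1$ and to $0$ for $\mu\ge a_j\mathbb{E}[\mathcal{N}]$; hence so is $\mu\mapsto\sum_jb_j(\mu)$, which drops from $J$ at $\mu=0$ to $0$ as $\mu\to\infty$. For any feasible budget $K\in\{1,\ldots,J\}$ the intermediate-value theorem then delivers a $\mu^*$ solving (\ref{Solve3}), which combined with the case analysis above gives the claimed $(b_j^*)$.

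The main subtlety I anticipate is the well-posedness of $\omega(\mu)$ in the middle regime: strict monotonicity of $g_j'$ on $(0,1)$, hence uniqueness of $\omega(\mu)$, requires $p_m>0$ for some $m\ge 2$. In the degenerate case $\mathcal{N}\le 1$ almost surely, $g_j'$ is constant in $b$, the middle band $[a_jp_1,a_j\mathbb{E}[\mathcal{N}]]$ collapses to a point on which the objective is flat in $b_j$, and the budget constraint (\ref{Solve3}) simply selects the classical ``cache the $K$ most popular contents everywhere'' rule -- consistent with the no-overlap discussion in the introduction. Away from this degeneracy, (\ref{Solve2}) defines $\omega(\mu)$ uniquely and the theorem's piecewise characterisation is sharp.
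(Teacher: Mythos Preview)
Your proposal is correct and follows essentially the same route as the paper's sketch: separability from Lemma~\ref{Lem1} reduces the inner Lagrangian maximisation to $J$ one-dimensional concave problems whose closed-form solution is read off from the endpoint values of $g_j'$, and Lemma~\ref{Lem2} together with the monotonicity of $\sum_j b_j(\mu)$ pins down $\mu^*$ via~(\ref{Solve3}). Your write-up is in fact more detailed than the paper's own sketch --- in particular the explicit endpoint evaluation $g_j'(1)=a_jp_1-\mu$, $g_j'(0)=a_j\mathbb{E}[\mathcal{N}]-\mu$ and the handling of the degenerate $\mathcal{N}\le 1$ case are welcome additions the paper omits.
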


\begin{proof}
\textbf{Sketch.} Given any dual price $\mu\geq 0$, we first solve the relaxed primary problem $\max_{\mathcal{F}_2}L\left(b_1,\ldots,b_J,\mu\right)$ over the $b_j$s as shown in (\ref{MinMaxL}). From Lemma \ref{Lem1} and the affinity of the relaxed constraint, the primal problem is separable in $J$ subproblems, each one having as constraint $b_j\in\left[0,1\right]$. The solution for each $\mu$, hence also for $\mu^*$ is given in (\ref{SolveSubJJ}) and (\ref{Solve2}). We further need to minimize the function $q\left(\mu\right)=\max_{\mathcal{F}_2}L\left(b_1,\ldots,b_J,\mu\right)$ over $\mu$. The standard way to do this is by use of a subgradient method. However, in our problem we need not proceed this way due to Lemma \ref{Lem2}, which states that the optimal primal solution satisfies the relaxed constraint with equality. To find $\mu^*$, we thus have to replace the solution (\ref{SolveSubJJ}) in the equation $\sum_{j=1}^Jb_j\left(\mu\right)=K$ and solve over $\mu$. The solution is unique since we can prove that the sum of $b_j$s is a decreasing function in $\mu$ within an interval that is guaranteed to contain the solution.
\end{proof}
\textbf{Algorithm.} The solution is found numerically as follows. We start by an interval of $\mu$ that contains the optimal solution, i.e. $\mu^{(0)}\in\left[\mu^{(0,\min)},\mu^{(0,\max)}\right]=\left[a_Kp_1,a_1\mathbb{E}\left[\mathcal{N}\right]\right]$. Then we use the \textit{bisection method}, according to 
which $\sum_{j=1}^J b_j\left(\mu^{(l+1)}\right)$ is evaluated for $l=0,1,\ldots$ at the dual price 
$\mu^{(l+1)}:=\mu^{(l,\min)}+(\mu^{(l,\max)}-\mu^{(l,\min)})/2$.
If the value of the sum is $<K$ then the search continues to the left interval and $\mu^{(l+1,\max)}:=\mu^{(l+1)}$, else if the sum is $>K$, the search continues to the right interval and $\mu^{(l+1,\min)}:=\mu^{(l+1)}$. The algorithm stops when the change in $\mu$ for some step $l$ is smaller than a chosen $\epsilon>0$. The difficulty in the implementation lies in the fact that we need to solve also the polynomial equalities of the form (\ref{Solve2}), which do not give a closed form solution when $M$ is large. We also solve these over $b_j$ by use of the bisection method.

To provide some intuition, we consider in what follows a simple example assuming $J=2, K=1$.
\begin{Exm}[2CP]\label{Cor1}
Consider the case of one-slot cache memory $K=1$ and the  content library of size $J=2$ with $a_1+a_2=1$ and maximum $M=2$ BSs covering a user. We call this problem the [2CP]. It has the following explicit solution.
The optimal pair $(b_1^*,b_2^*)$ that solves the [2CP] problem is $b_2^*=1-b_1^*$ where
\begin{eqnarray}
\label{OFD2++}
b_1^* =\left\{\begin{tabular}{l l}
$\frac{2a_1\left(p_1+p_2\right)-p_1}{2p_2}$, & if $0 \leq a_1\leq 1-\frac{p_1}{2(p_1+p_2)}$.\\
$1$, & otherwise
\end{tabular}\right.
\end{eqnarray}
\end{Exm}

\section{Performance Evaluation}
\label{secV}

We will now evaluate the performance of the placement policy given in Theorem \ref{Th1} 
for some  three different coverage models, which we present first.

\subsection{Coverage Models}
\label{ss.CoverageModels}
In the following we will overview three specific network models that give different expressions to the coverage number probability $p_m$.

\subsubsection{SINR Model} The quality of coverage at the origin is described by the $\mathrm{SINR}_o$ (from now on $\mathrm{SINR}$). $\mathrm{SINR}(x_i)$ is the $\mathrm{SINR}$ at the reception, when user $o$ is connected to BS $x_i\in\Phi$ and is defined as
\begin{equation}
\label{SINRdef}
\mathrm{SINR}(x_i):=\frac{S_{i}/\ell(r_i)}{W+I-S_i/\ell(r_i)}.
\end{equation}
In the above, $S_i$ is the shadowing experienced between the typical user and the BS at $x_i$. The constant $W$ is the noise power, $I=\sum_{x_i\in\Phi}S_i/\ell(r_i)$ is the total received power from the network, $r_i=\left| x_i\right|$ is the distance of $x_i$ from $o$, and $\ell(r)=\left(Br\right)^{\beta}$ is the path-loss function, with constants $B>0$, $\beta>2$. We say that the typical user is covered when $\mathrm{SINR}(x_i)>T$, where $T$ is a predefined positive threshold.

The coverage number $\mathcal{N}\left(T\right)$ indicates how many BSs cover the typical user simultaneously and is the r.v.
\begin{equation}
\label{NT}
\mathcal{N}\left(T\right)  =  \sum_{x_i\in\Phi}\mathbf{1}[\mathrm{SINR}(x_i)>T].
\end{equation}
For the coverage of a user who can choose to be served by different BSs  in some realization of $\Phi$, we make use of a basic result from \cite[Proposition 6.2]{BacBlaVol1}, analysed further in \cite{KeelerBartek13}. 
It is shown that if $m$ stations cover a user at $\mathrm{SINR}$ level $T$, then the following inequality holds
\begin{equation}
\label{SINRT}
M= \left \lceil{\frac{1}{T}}\right \rceil  \Leftrightarrow  m< 1+1/T.
\end{equation}
For example, when $T\geq 1$ then necessarily $m< 1+\frac{1}{T}<2$ which implies that $m\in\left\{0,1\right\}$ and $M=1$. Similarly, when $1> T\geq1/2$,  $m\in\left\{0,1,2\right\}$ and $M=2$, etc. 
Based on this model, the authors in \cite{KeelerBartek13} have given explicit expressions for the probability that the typical user is covered by exactly $m$ BSs in the $\mathrm{SINR}$ model without frequency reuse. For general shadowing, they have calculated the probability 
\begin{align}
\label{pSINR}
p_m^{\mathrm{SINR}} &:=  \mathbf{P}\left[\mathcal{N}\left(T\right)=m\right]\\
&=\sum_{n=k}^{\infty}  (-1)^{n-k}{n\choose k}\mathcal{S}_n(T)\,,\nonumber
\end{align}
where  
$$\mathcal{S}_n(T)=\textstyle{\left(\frac{T}{1-(n-1)T}\right)^{-2n/\beta}}  \mathcal{I}_{n,\beta}(Wa^{-\beta/2})  \mathcal{J}_{n,\beta}\textstyle{\left(\frac{T}{1-(n-1)T}\right)}$$
for $0<T<1/(n-1)$  and $\mathcal{S}_n(T)=0$ otherwise, 
with $a=\lambda\pi \mathbb{E}[S^{\frac{2}{\beta}}]/B^{2}$, $\lambda$ is the intensity of base stations and 
\begin{equation}\label{In}
\mathcal{I}_{n,\beta}(x)=\frac{2^n
\int_0^{\infty} u^{2n-1}e^{-u^2-u^\beta x\Gamma(1-2/\beta)^{-\beta/2}} du
}{\beta^{n-1}(C'(\beta))^n(n-1)!}
\end{equation}
where 
\begin{equation}
 C'(\beta)=\frac{2\pi}{\beta\sin(2\pi/\beta)}=
\Gamma(1-2/\beta)\Gamma(1+2/\beta).
\end{equation}
\begin{equation}\label{e.Jn}
 \mathcal{J}_{n,\beta}(x)=\int_{[0,1]^{n-1}}
 \frac{   \prod\limits_{i=1}^{n-1}   v_i^{i(2/\beta+1)-1}(1-v_i)^{2/\beta}
  }{\prod\limits_{i=1}^{n-1} (x+\eta_i)} dv_1\dots
dv_{n-1}
\end{equation}
where 
$\eta_i:=(1-v_i)\prod_{k=i+1}^{n-1}v_k$.
The software developed for MATLAB is available in \cite{KeelerMATLABk} to get the numerical values of $p_m^{\mathrm{SINR}}$. 

In the interference limited network $(W=0)$ some results regarding the Poisson-Dirichlet model can be used to calculate equivalently the above coverage probabilities, cf~\cite[Prop,~6]{sinrPD}. 




\subsubsection{Boolean Model} For the noise-limited case, where the interference is small compared to noise, we can use the Boolean model to calculate the probability of coverage by $m$ BSs. This is a germ-grain model, where the atoms of the PPP are the germs. Centered on each atom is a grain, i.e. a 2D sphere $\mathcal{B}\left(x_i,R_b\right)$ which describes the area of coverage. $R_b$ is a fixed radius that can be expressed by communications quantities. Specifically, if we only consider path-loss and no fading, the received signal at the boundary should be larger than the threshold, in order to guarantee coverage, i.e. $(\tilde{B}R_b)^{-\beta}\geq T$ $\Rightarrow$ $R_b=T^{-1/\beta}\tilde{B}^{-1}$.
It is shown in \cite[Lemma 3.1]{BacBlaVol1} that the number of BSs covering the typical user follows a Poisson distribution with parameter $\nu=\lambda\pi\left(T^{-1/\beta}\tilde{B}^{-1}\right)^2$ and we get
\begin{equation}
\label{pBoo}
p_m^B  =  \frac{\nu^m}{m!}e^{-\nu}.
\end{equation}


\subsubsection{Overlaid 2-Network Model} Very often in practice, it occurs that 2 (or more) networks of the same provider operate in parallel over an area, using different infrastructure (nodes) and orthogonal resources (bandwidth). It is typical, for example, for operators to have one network of base stations for 3G/4G technology and numerous WiFi hotspots within a city. Given that a user may chose between the two to connect to the Internet with his/her cellphone (and assuming for simplicity mathematically independent models of these two networks), the coverage number at the typical user is distributed as the convolution of the coverage probability vectors of the two individual networks $\mathbf{p}^{(1)}=\left[p_1^{(1)},\ldots,p_M^{(1)}\right]$ and $\mathbf{p}^{(2)}=\left[p_1^{(2)},\ldots,p_M^{(2)}\right]$, that is (with $p_m^{(\cdot)}:=0$ for $m<0$)

\begin{equation}
\label{2Net}
\mathbf{p}^{2NET} = \mathbf{p}^{(1)}*\mathbf{p}^{(2)} \ \Rightarrow \  p_m^{2NET} = \sum_{n=0}^M p_n^{(1)}p_{m-n}^{(2)}.
\end{equation}

\subsection{Performance of the Content Placement Policies}
In this section 
we show the performance benefits of our scheme compared to a standard policy, the one that places in the cache memory of size $K$, always $K$ Most Popular Contents [MPC]. For the [MPC] policy, $b_1=\ldots=b_K=1$ $\&$ $b_{K+1}=\ldots=b_J=0$ and the objective function is always equal to $f^{(MPC)}=(1-p_0)\sum_{j=1}^K a_j$. 
As shown in the following plots, when the user has significant probability to access more than one cache, the [MPC] is suboptimal. This result is intuitive, because a user covered by $m>1$ BSs, can search in $mK$ memory slots instead of $K$. 

 \begin{figure}[t!]    
\centering  
\label{CacheEval}
\subfigure{          
           \includegraphics[trim = 7mm 30mm 10mm 25mm, clip, width=0.3\textwidth]{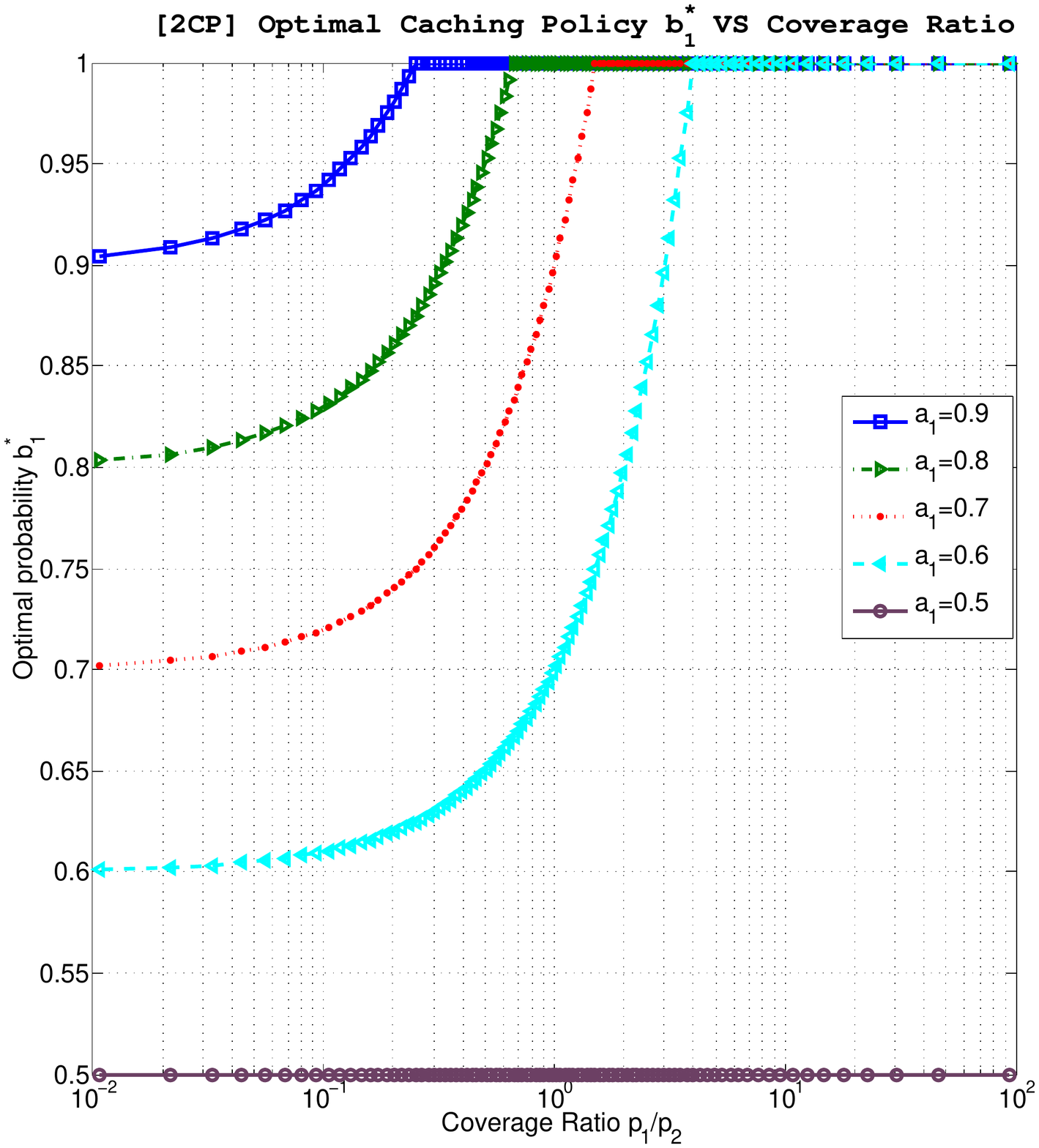}
           \label{CacheEval:2CPb}
           }
	   \subfigure{          
           \includegraphics[trim = 7mm 30mm 8mm 25mm, clip, width=0.3\textwidth]{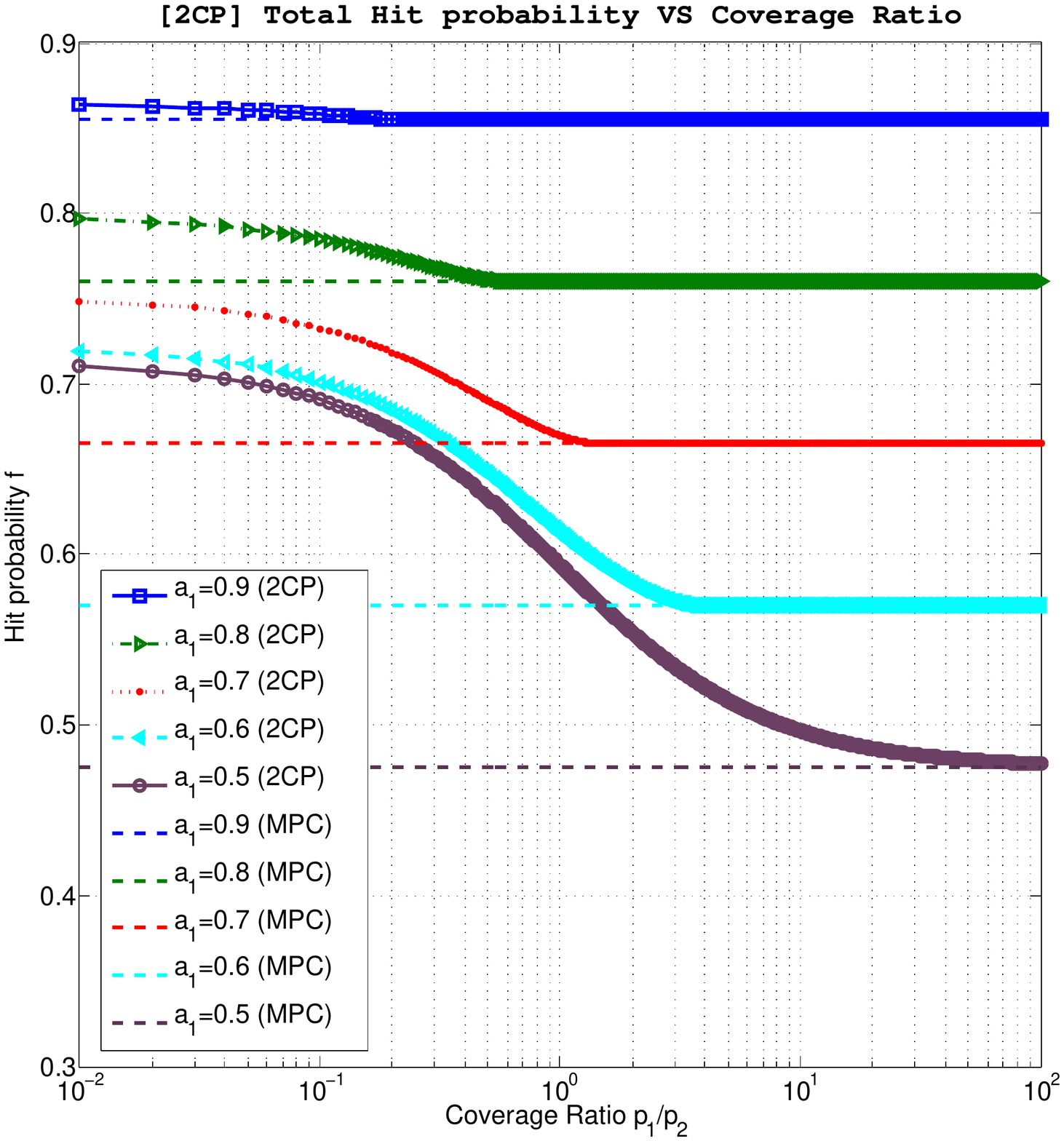}
           \label{CacheEval:2CPf}
           }
\caption{Case [2PC]: (a) The optimal caching policy $b_1^*$ and (b) The maximum hit probability $f^*$ (objective function), with respect to the coverage ratio $p_1/p_2$. The evaluation is done for different values of $a_1$. In (b) the optimal hit probability value is compared with the one when [MPC] policy is applied.}
\end{figure}

\subsubsection{Simple Scenario [2CP]}
We start by the solution of the [2CP] provided in Example~\ref{Cor1}.  In the simulated example, we assume $p_0=0.05$ to be the probability that the typical user is not covered by any BS. Hence $p_1+p_2=0.95$. A general picture of the way the optimal caching policy $\left(b_1^*,b_2^*\right)$ varies w.r.t. the coverage ratio $p_1/p_2$ and for different values of the popularity $a_1$, is given in Fig. \ref{CacheEval:2CPb}. Here, the ratio $p_1/p_2$ varies from $10^{-2}\rightarrow 10^2$ and we find for each value the optimal $b_1^*$, given $a_1=\left\{0.5,0.6,0.7,0.8,0.9\right\}$. 
We can deduce from the figure, that when each location is covered with high probability by 2 BSs, it is optimal to cache with probability $b_1^*\approx a_1$. When each location is covered with high probability by a single BS, it is optimal to cache with [MPC], 
i.e. $b_1^*\approx 1$.

In Fig. \ref{CacheEval:2CPf} we plot the objective function $f^*$ of [2CP], given the solution $(b_1^*,b_2^*)$ and for different values of $a_1$. We compare the solution to the value of the objective function under the [MPC], which is always equal to $f^{(MPC)}=a_1(1-p_0)$, irrespective of the values of $p_1,p_2$. From the figure we can observe a considerable performance improvement in the total hit probability, which is especially large when $a_1/a_2$ is small (comparable popularities) and when $p_1/p_2$ is small. %

\begin{figure*}[th!]    
\centering  
\label{CacheEval2}        
           \subfigure[Hit probability (Boolean).]{          
           \includegraphics[trim = 5mm 30mm 10mm 30mm, clip, width=0.3\textwidth]{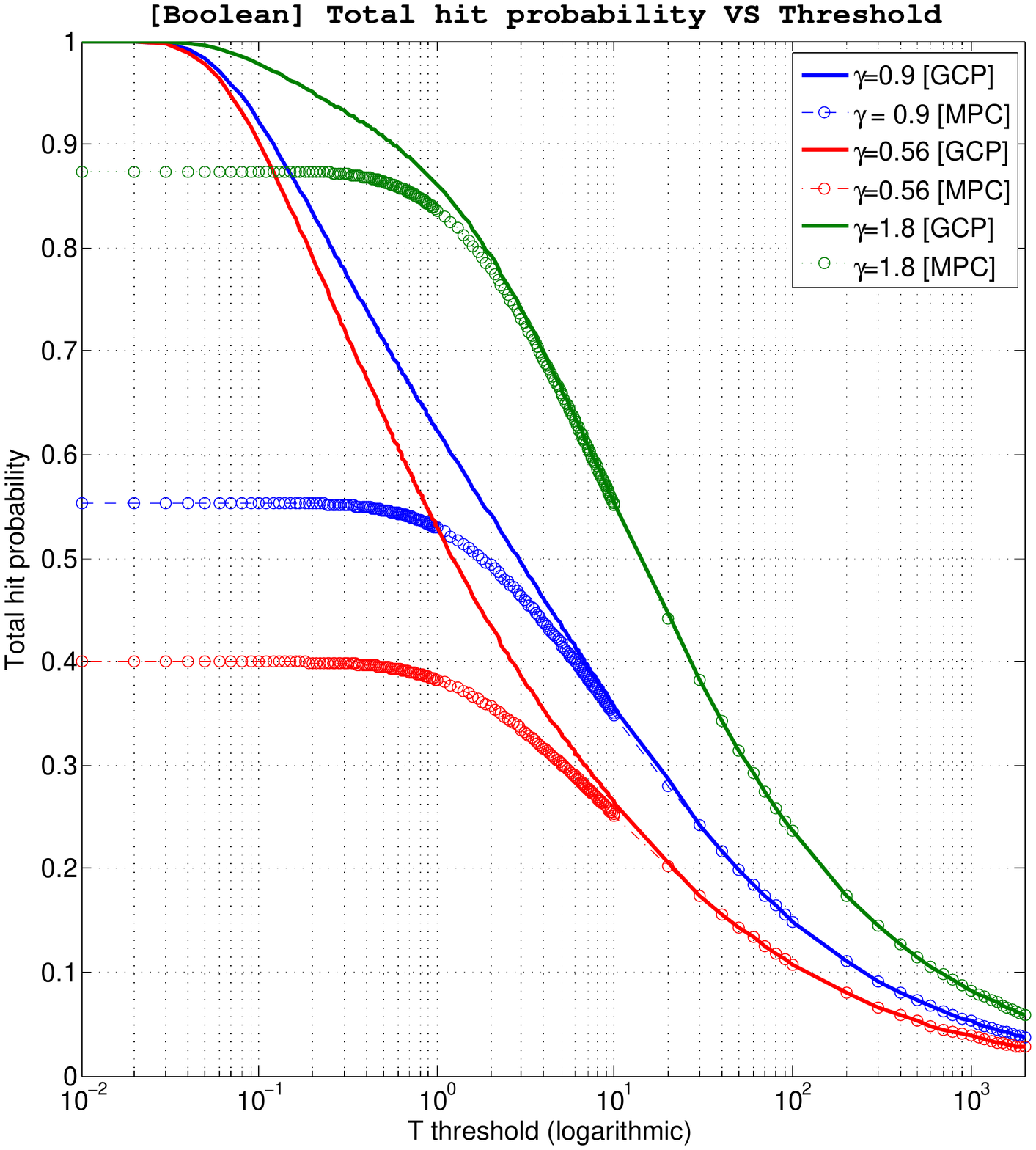}
           \label{CacheEval2:Bool}
           }
	   \subfigure[Hit probability ($\mathrm{SINR}$).]{          
           \includegraphics[trim = 7mm 30mm 10mm 30mm, clip, width=0.3\textwidth]{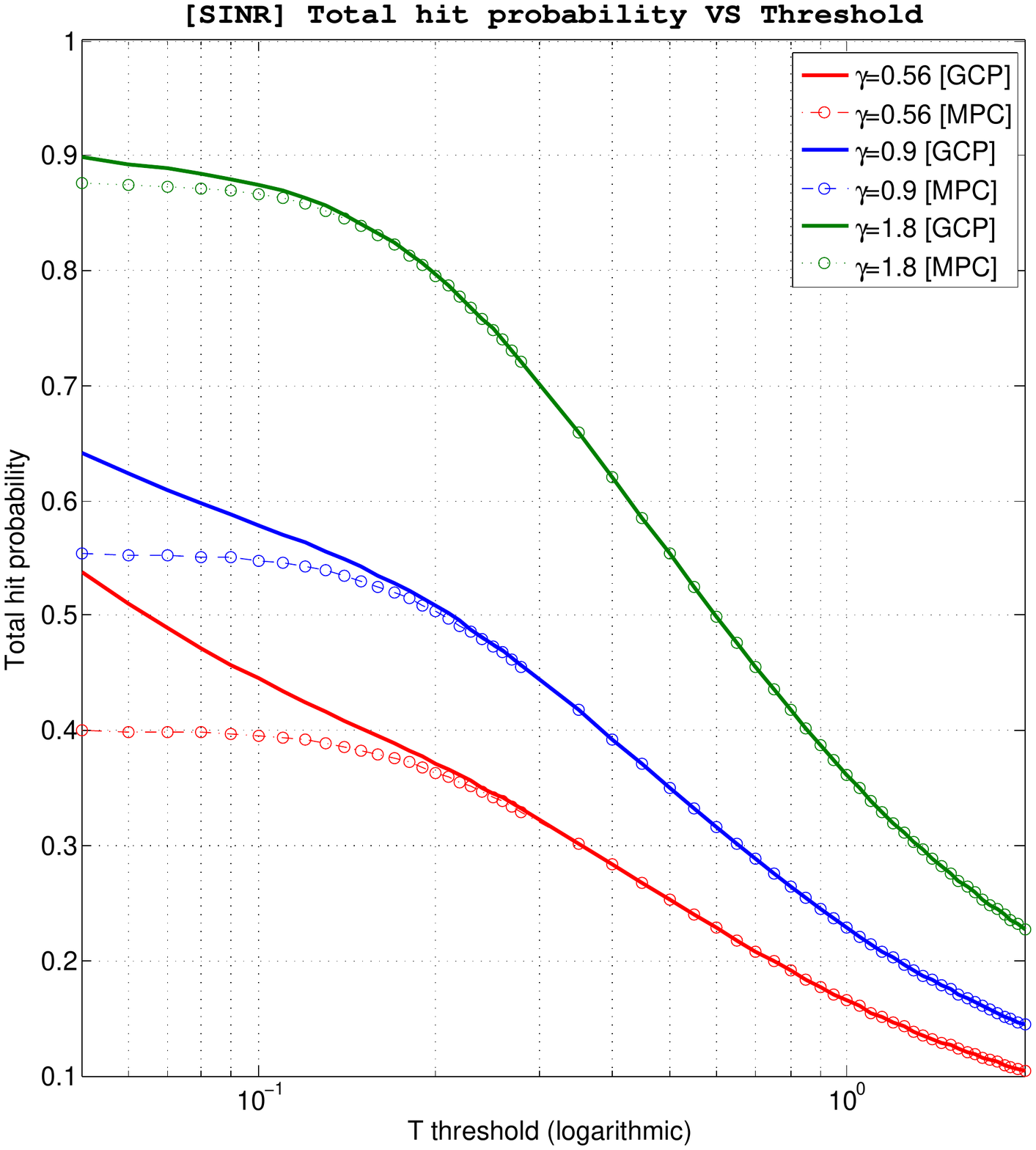}
           \label{CacheEval2:SINR}
           }
	   \subfigure[Hit probability (2NET).]{          
           \includegraphics[trim = 7mm 30mm 10mm 30mm, clip, width=0.3\textwidth]{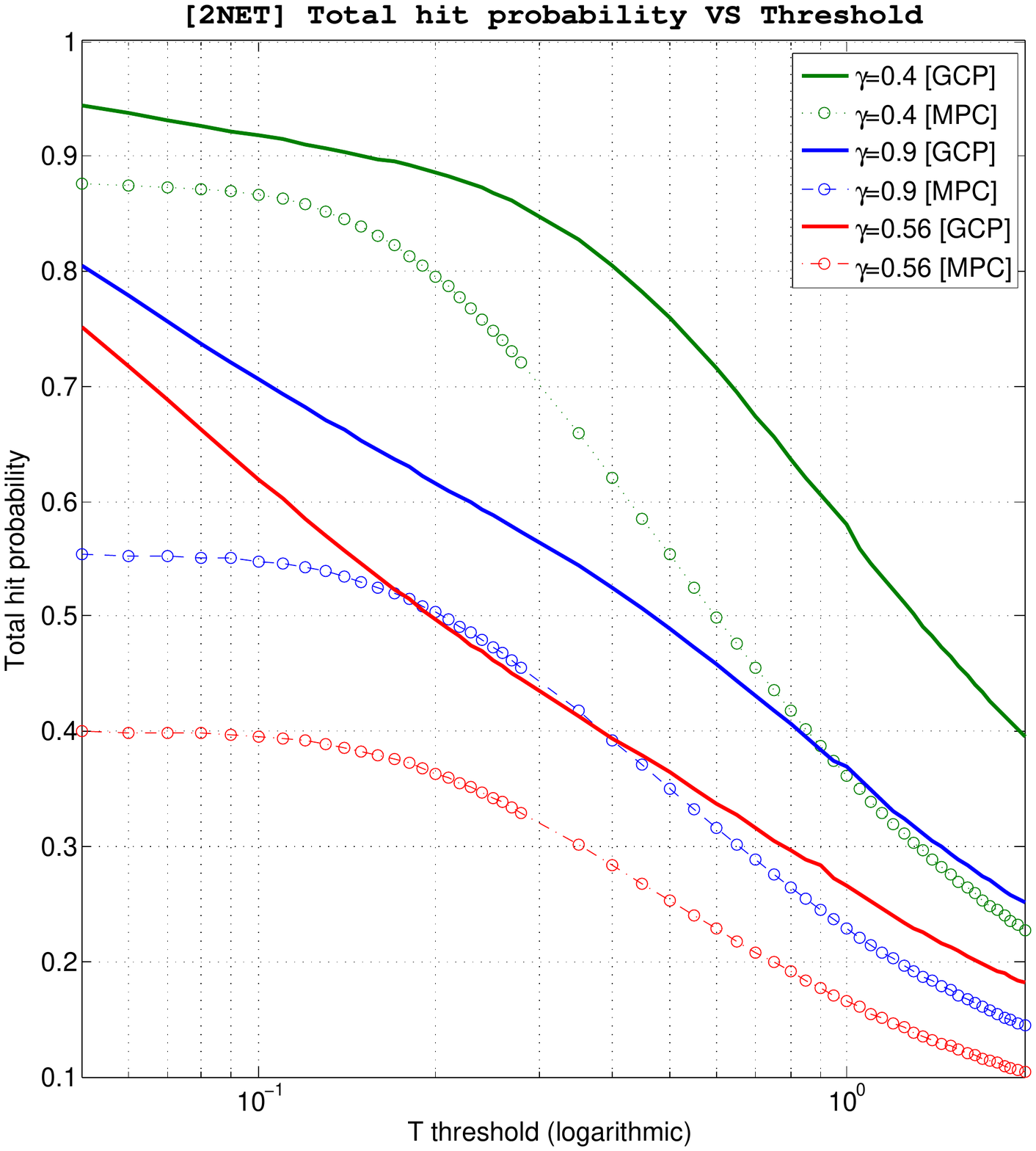}
           \label{CacheEval2:2NET}
           }
\caption{Evaluation of the optimal policy [GCP] and comparison with the [MPC] policy for the three different coverage models.}
\end{figure*}

\subsubsection{Boolean, $\mathrm{SINR}$ and Overlaid 2-Network Coverage}

We further evaluate the general problem [GCP] for the three coverage models suggested in Section \ref{secII}. We consider a content library of size $J=25$ and cache memories of size $K=5$. Since in all three models coverage depends on the threshold ratio $T$, we use the latter as the variable on the x-axis. In all cases, increasing the service threshold $T$ reduces the probability of coverage by $m>0$ BSs, and consequently increases $p_0$. Another important aspect is the relation of $T$ with the transmission rate $R$. These two are related through the Shannon formula 
$R = B_W\frac{1}{2}\log_2\left(1+T\right)$,
where the transmission bandwidth is considered here equal to $B_W=5$ MHz for typical applications.

In Fig. \ref{CacheEval2:Bool} we evaluate the total hit probability under the Boolean model, for which the values of $p_m^B$ are calculated as in (\ref{pBoo}). 
We choose $M=10$. The evaluation spans the threshold values $T=10^{-2}\rightarrow 2\cdot 10^3$. Compared to [MPC], we observe considerable gains in hit probability until $T\leq 10$, which corresponds to rate service of $R\leq 8650$ Kbits/sec. Given that $8000$ Kbits/sec is a very high \textit{video} quality from YouTube \cite{YouTubeEncode}, our approach can realistically improve the backhaul cellular network  traffic under this model.

In Fig. \ref{CacheEval2:SINR} the same performance evaluation is done for the $\mathrm{SINR}$ model, where the probability of coverage is found in closed form in \cite{KeelerBartek13}. We use the software developed for MATLAB and available in \cite{KeelerMATLABk} to get the numerical values of $p_m^{\mathrm{SINR}}$. These are used as input to solve the [GCP]. We chose to evaluate the interference-limited case, i.e. $W=0$, thus we consider $\mathrm{SIR}$. For numerical integration reasons, the minimum threshold is taken to be $5\cdot 10^{-2}$, which from (\ref{SINRT}) refers to at most $25$ BSs covering a planar point. The maximum threshold value is $2$ because due to (\ref{SINRT}) at most $1$ BS can cover a planar point when $T\geq 1$. From the figure we see that the benefits are not very important and appear until $T\leq 0.2$, or equivalently $R\leq 650$ Kbits/sec. This rate refers to \textit{audio files} rather than video files, given that the a high quality encoded audio file has a rate of $512$ Kbits/sec. 
The main reason for the poor performance is the generally low probability of coverage by more than one BS (around $20\%$ at best). We conclude that in the $\mathrm{SINR}$ model without frequency reuse, it is optimal to use [GCP] for low bit rate content (audio) and [MPC] for high bit rate content (video).

Finally, Fig. \ref{CacheEval2:2NET} illustrates the performance gains when the [GCP] is applied to the case of coverage by 2 independent overlaid networks (2NET). The coverage probability $p_m^{2NET}$ is given in (\ref{2Net}). 
For both vectors of the convolution, we use the same numerical values from $p_m^{\mathrm{SINR}}$ as calculated in the single $\mathrm{SINR}$ network above. Due to the convolution, the coverage probability for $m>1$ is now increased, since most planar areas will be covered by at least two BSs. In such case the [GCP] policy has impressive benefits in the entire domain of $T$, compared to the [MPC]. More than any other, this case emphasises the great potentials of optimal geographic caching of content.

\section{Conclusions}
In this work, we have revisited the problem of optimal content placement in caches within a cellular network. We exploited the fact that certain areas are covered by multiple BSs. An optimal policy is derived which suggests that when multi-coverage areas are significant, it is not optimal to cache the most popular contents everywhere. The total hit probability of the policy is evaluated in plots for three different coverage models (Boolean, $\mathrm{SINR}$, Overlaid 2-Network) and the results are highly in favour of our approach.

\label{secVI}

\bibliographystyle{unsrt}
\footnotesize
\bibliography{CacheStoGeoB.bib}

\end{document}